\def\be{\begin{equation}}
\def\ee{\end{equation}}
\def\ba{\begin{array}{c}}
\def\ea{\end{array}}
\def\ben{\[}
\def\een{\]}
\newcommand{\bea}{\begin{eqnarray}}
\newcommand{\eea}{\end{eqnarray}}
\newtheorem{thm}{Theorem}
\newtheorem{cor}[thm]{Corollary}
\newtheorem{lemma}[thm]{Lemma}
\newenvironment{proof}{\noindent
 {\bf Proof.}}{\hfill$\square$\vspace{3mm}\endtrivlist}
\begin{document}

\begin{center}

{\Large \bf {

Arnold's potentials and quantum catastrophes II.

 }}

\vspace{1.3mm}

\vspace{3mm}

\begin{center}

\vspace{0.8cm}

  {\bf Miloslav Znojil}

\vspace{0.2cm}

The Czech Academy of Sciences, Nuclear Physics Institute,

 Hlavn\'{\i} 130,
250 68 \v{R}e\v{z}, Czech Republic

\vspace{0.2cm}

 and

\vspace{0.2cm}

Department of Physics, Faculty of Science, University of Hradec
Kr\'{a}lov\'{e},

Rokitansk\'{e}ho 62, 50003 Hradec Kr\'{a}lov\'{e},
 Czech Republic

\vspace{0.2cm}

{e-mail: znojil@ujf.cas.cz}

\vspace{0.3cm}

and

\vspace{0.3cm}

{\bf Denis I. Borisov}

\vspace{0.2cm}

Institute of Mathematics, Ufa Federal Research Center, RAS,
Chernyshevskii str. 112, 450008
Ufa, Russia

\vspace{0.2cm}

 and

\vspace{0.2cm}

Bashkir State University, Zaki Validi str. 32, 450076 Ufa, Russia

\vspace{0.2cm}

 and

\vspace{0.2cm}

Department of Physics, Faculty of Science, University of Hradec
Kr\'{a}lov\'{e},

Rokitansk\'{e}ho 62, 50003 Hradec Kr\'{a}lov\'{e},
 Czech Republic

\vspace{0.2cm}

 e-mail: BorisovDI@yandex.ru

\end{center}

\vspace{3mm}

\end{center}

\subsection*{Keywords:}

Schr\"{o}dinger equation; multi-barrier polynomial potentials;
avoided energy-level crossings; abrupt wavefunction
re-localizations; quantum theory of catastrophes;

\subsection*{PACS number:}

PACS 03.65.Ge - Solutions of wave equations: bound states

\newpage

\subsection*{Abstract}

In paper I (M. Znojil, Ann. Phys. 413 (2020) 168050)
it has been demonstrated that
besides the well known use of
the Arnold's one-dimensional polynomial potentials
$V_{(k)}(x)= x^{k+1} + c_1x^{k-1} + \ldots$
in the classical Thom's catastrophe theory,
some of these potentials (viz., the confining ones, with
$k=2N+1$)
could also play an analogous
role of genuine benchmark models
in quantum mechanics, especially
in the dynamical regime in which
$N+1$ valleys are separated by
$N$ barriers.
For technical reasons,
just the
ground states in the
spatially symmetric
subset of $V_{(k)}(x)=V_{(k)}(-x)$
have been considered.
In the present paper II
we will show that and how
both of these constraints
can be relaxed.
Thus, even the knowledge of the trivial
leading-order form of the
excited states
will be shown sufficient to provide
a new, truly rich level-avoiding spectral pattern.
Secondly,
the fully general asymmetric-potential scenarios will be
shown tractable perturbatively.

\newpage

\section{Introduction}

In many realistic applications of quantum mechanics
(say, in molecular or nuclear physics)
the structure of the bound-state spectrum
is often found fairly sensitive
to the variation of
parameters. In particular, after a small change
of these parameters some of
the neighboring energy levels
appear to merge and cross
(cf., e.g., our preceding paper I \cite{ArnoldI}
for references).
Although a more detailed inspection of the
spectrum reveals that the crossings
are in fact avoided \cite{vono},
the phenomenon permanently
attracts
interest
of theoreticians \cite{Berry,denis2,sdenisem}
as well as experimentalists
\cite{Stellenbosch}.

One of the reasons is that
after an analytic continuation
of the Hamiltonian in the
complex plane of a parameter,
the avoided energy-level crossing (ALC)
phenomenon
proves intimately related to
the complex singularities
called (Kato's) exceptional points
(EPs, \cite{Kato}). For example,
in the one-dimensional
Schr\"{o}dinger equation
 \be
  H(a,b,\ldots)\,\psi_n(x) = E_n(a,b,\ldots)\,\psi_n(x) \,,
  \ \ \ \ n = 0, 1, \ldots\,,
 \label{bsse}
 \ee
 \ben
 \ \ \ \ \ H(a,b,\ldots)=-\frac{\hbar^2}{2\mu}\,\frac{d^2}{dx^2}
 + V(x,a,b,\ldots)\,,\ \ \
 \psi_n(x)\,\in\,L^2(-\infty,\infty)\,
 \,
 \een
an explicit form of the ALC-EP correspondence
emerges even after the most elementary choice of the
exactly solvable harmonic-oscillator potential~\cite{ptho}.
A
message deduced from this observation
is that whenever one studies the quantum ALC phenomenon,
it makes sense
to start the analysis from analytic
potentials of polynomial form.

In paper I, for this reason, we paid attention to the study of
Schr\"{o}dinger equation (\ref{bsse}) for a specific sequence of
spatially symmetric polynomial potentials
 \be
 V(x,c_1)=x^4 +c_1\,x^2\,,\ \ \ \
 \label{spefa}
 V(x,c_1,c_2)=x^6 +c_1\,x^4+c_2\,x^2\,,\ \ \ \
 \ldots\,.
 \ee
In a way summarized in section \ref{dvojkaka} below
this enabled us to find another connection between
the quantum ALC
and the phenomenon of
bifurcation of equilibria
in certain classical dynamical systems
\cite{Thom,Zeeman}.
Indeed, sequence (\ref{spefa}) is just an
even-parity subset of family
 \be
 V_{(k)}^{(Arnold)}(x,a,b,\ldots)
 = x^{k+1} + a\,x^{k-1} + b\,x^{k-2} + \ldots\,,
 \ \ \ \ k = 1, 2, \ldots\,
 \label{pefa}
 \ee
which has been proposed by Arnold \cite{Arnoldbook}.
In an
extended version of
the Thom's theory Arnold
emphasized that
the use of  the general
polynomial potentials (\ref{pefa})
offers a geometric picture of the branching of evolutions
in the generic one-dimensional classical dynamical systems.

In paper I it has been pointed out that at the odd subscripts
$k=2N+1$ the classical catastrophes could be also paralleled, in
consistent manner, by their quantized analogues. Under the
assumption of spatial symmetry (\ref{spefa}) such a project proved
technically feasible. Via a suitable {\it ad hoc\,}
reparametrization of the couplings in (\ref{pefa}), a systematic
classification of the sudden ALC-related changes of the state of the
systems has been obtained. Naturally, due to the characteristic
mathematical fact of the tunneling, not all of the classical types
of the catastrophe found their fully analogous quantum counterpart.
Some of them (like, e.g., cusp) got smeared out and smoothed after
quantization. What still survived were the abrupt, catastrophe-like
changes of the topological properties of probability densities. They
remained measurable and, hence, phenomenologically meaningful. For
this reason it has been suggested to call these specific types of
the ALC-related bifurcations of equilibria ``quantum relocalization
catastrophes'' (QRC).

In our present paper we intend to
complement these results
by a few vital and important addenda.
First of all we will turn attention to
the role of asymmetry in section \ref{roas}
and to its possible influence upon the
quantum-catastrophic scenarios (section \ref{trojka}).
Besides such an extension of
physics
we will reconsider also
the mathematical
problem of the growth of the technical
obstacles encountered after one removes
the parity-symmetry constraint $V(x)=V(-x)$.
Resolution of the challenge will be shown provided by
perturbation theory.
We will decompose the general Arnold's confining
quantum
potential (\ref{ourpo})
into a dominant, even-parity unperturbed component $V^{(even)}(x)$
complemented by an
odd-parity perturbation $V^{(odd)}(x)$.

The basic idea will lie in a combination of the standard assumption
of the smallness of perturbations, $V^{(odd)}(x)={\cal
O}(\epsilon)$, with an innovative ``re-use'' of the above-mentioned
technical user-friendliness of the reparametrizations of the
even-parity polynomials. Such a reparametrization of coefficients
proved essential, in paper I, during the localizations of ALCs for
the even Arnold's polynomial potentials
$V^{(even)}(x)=V^{(even)}_{(N)}(x)$ of degree $2N+2$. Here, the
trick will be re-applied also to the products
$V^{(odd)}(x)=\epsilon\,x\,V^{(even)}_{(M)}(x)$. The control of the
shape of a perturbation in $V^{(odd)}(x)$ will be transferred to the
control of the shape of the even-parity auxiliary polynomial of a
smaller degree $2M+2 < 2N+2$. The repeated application of the same
reparametrization trick to perturbations of an independently tunable
shape will be illustrated in section \ref{petka}.

In the final part of the paper we will
turn attention to excited states.
We will oppose the popular opinion
(advocated even in paper I) that
the only meaningful form of quantization would be a
transition from the study of minima of the potentials
(representing the stable long-term equilibria of classical systems)
to the
analogous analysis of the quantum ground states.
We will remind the readers that such an attitude has recently been
criticized, by Smilga \cite{smilgasem}, as over-restrictive.
He proposed that in any
phenomenologically responsible analysis of quantum systems
the excited states
may often be treated as long-term stable structures.
We only have to distinguish between
the ``malign'' (i.e., the de-excitation supporting)
and ``benign''   (i.e., the de-excitation not
supporting) nature of the states with respect to the
relevant physical
perturbations.
This means
that the
stability status of the quantum excited states is
model-dependent. The conventional emphasis upon
the destructive role of perturbations should not be exaggerated,
especially in the most common case of
the low-lying spectra of bound states
which are often, in real world, all long-time stable.
In this spirit we will demonstrate, in section \ref{ctyrka},
the technical feasibility of an
extension of the
ALC-related
constructions from the
exclusive ground-state scenarios of Ref.~\cite{ArnoldI}
to the whole
low-lying spectra of the excited bound states.

In a slightly broader perspective some open questions and
qualitative aspects of our present results will be discussed in
Appendices A and B, with a final summary presented in section
\ref{soummar}.


\section{The smearing caused by the tunneling\label{dvojkaka}}

\subsection{Multiple-well potentials}

In phenomenological applications the choice of the potential in
Schr\"{o}dinger equation (\ref{bsse}) is usually dictated by an
expected correspondence between quantum system and its suitable
classical-physics analogue. Such a purely intuitive idea played an
important role not only during the birth of quantum mechanics but
also in the various related methodical considerations. In paper I,
in particular, the classical-quantum correspondence has been found
inspiring due to its relevance in dynamical systems. On the
classical-physics side, the key role has been known to be played by
the singular evolution scenarios called catastrophes \cite{Zeeman}
and/or bifurcations \cite{Arnoldium}. On the quantum theory side,
unfortunately, the singularities of such a type are usually assumed
``smeared out'' by the quantization \cite{Mukhanov}. For this reason
the progress in this field is far from rapid \cite{catast}.

Incidentally, the
range of applicability of the
Thom's
qualitative picture of
evolution patterns
was by far not restricted to physics.
In fact, the basic motivation
of the development of the
catastrophe-related mathematics lied, originally,
in biology \cite{Thom}. Moreover,
its intuitive geometric form found, paradoxically,
various formal and descriptive applications not only in
disciplines like theoretical chemistry \cite{Thomchem}
but even in
quantum
physics itself \cite{catanucl,dell1,dell2}.
Obviously, the deep conceptual differences between the
classical and quantum pictures of reality are reflected also
in the underlying mathematics. The necessity of replacement of the
observed numbers (representing, say, an energy or position
of a classical point particle) by the operators in Hilbert space
implies that the traditional variational and/or geometric
tractability of these observables is lost,
and the tools of the functional analysis have to be used
\cite{catast}.
Even the
Arnold's polynomial potentials (\ref{pefa})
will lose their quantum-mechanical applicability
at the odd subscripts $k=2N+1$. Thus, we have to
restrict our attention to the subset of polynomials
 \be
 V(x)=x^{2N+2}+a\,x^{2N}+b\,x^{2N-1}+\ldots +z\,x\,
 \label{ourpo}
 \ee
where an elementary shift of the
origin on the real line of $x$ enabled us to eliminate
the subdominant term.
Without any loss of generality we were also able to
fix $V(0)$ via a suitable
shift of the energy scale.

\subsection{Double-well potentials ($N=1$)}

In the framework of the classical theory of catastrophes
our
confining $2N-$parametric
potentials (\ref{ourpo}) can be characterized,
according to Arnold \cite{Arnold},
by the Dynkin diagrams $A_k$
with odd $k=2N+1$.
After quantization, i.e., after insertion in
Schr\"{o}dinger equation (\ref{bsse})
one could have a tendency to start from
the two-parametric $N=1$
model representing a {\em classical\,} catastrophic
scenario called
cusp,
 \be
 V(x) =V^{(cusp)}(a,b,x)
 ={x}^{4}+a\,{x}^{2}+b\,x
 \,.
  \label{[2]}
 \ee
One reveals that also
such a choice is to be excluded because,
in contrast to
the classical case,
the solutions
change smoothly with the parameters.
The related classical cusp catastrophe
proves really smeared out by the quantization \cite{ArnoldI}.
Due to the quantum tunneling the
wave function of the
system will reside, even in double-well case,
in both of the valleys.
Hence, the change of the sign of $a$
will not induce any abrupt changes of the observable
features of the system.

After we
incorporate the second variable coupling $b$,
the left-right symmetry becomes broken, but
the smoothness of the dynamics remains
unchanged.
The change of $b$ merely
moves one of the minima downwards.
Locally, the well in its vicinity gets broadened so that
irrespectively of the sign of $b$
the other minimum moves up and,
from the point of view of the low-lying spectrum, it
loses its relevance.
The ground-state wave function
$\psi_0(x)$
will tunnel out of the upper well.
The process remains smooth. In a search for
abrupt changes
our attention has to be redirected to
the less elementary shapes of
potentials (\ref{ourpo}) with $N \geq 2$.

\subsection{Reparametrizations  ($N=3$)}

At the larger $N$, the reparametrization-based
localization of the ALC catastrophes
is vital but it
could already be perceived as a challenging task.
This has been emphasized in paper I.
Still, we intend to demonstrate that in a way sampled at $N=3$,
an exhaustive study of the menu of catastrophes
generated by the asymmetric, six-parametric Arnold's potential
 \be
 V^{(N=3)}(x)
 = x^8 +a\,x^6 +b\,x^5  +c\,x^4 +d\,x^3  +f\,x^2 +g\,x\,
 \label{butter3}
 \ee
can be simplified. What is only needed is
a reinterpretation and restriction of the
general asymmetric potential
to a weakly asymmetric
model defined as a perturbation
of the symmetric three-parametric unperturbed polynomial
 \be
 V_0(x)
 = x^8 +a\,x^6   +c\,x^4   +f\,x^2 \,.
 \label{butter3}
 \ee
The latter model can be reparametrized:
we
first evaluate its derivative and factorize it in
a way which is slkightly different from the recipe of Ref.~\cite{ArnoldI},
 $$
 V_0'(x)\sim \left( {x}^{2}-{\alpha}^{2} \right)  \left( {x}^{2}-{\alpha}^{2}-
 {\beta}^{2} \right)  \left( {x}^{2}-{\alpha}^{2}-{\beta}^{2}-{
 \gamma}^{2} \right)\,.
 $$
This
enables us to arrive at the marginally simpler reparametrizations
of the couplings,
 $$
 a=-4\,{\alpha}^{2}-8/3\,{\beta}^{2}-4/3\,{\gamma}^{2}\,,
 $$
 $$
 c= 8\,{\alpha}^{2}{\beta}^{2}
 +4\,{\alpha}^{2}{\gamma}^{2}+2\,{\beta}^{4}
 +6\,{{\alpha}}^{4}+2\,{\beta}^{2}{\gamma}^{2}\,,
 $$
 $$
 f=-4\,{\alpha}^{2}{\beta}^{2}{\gamma}^{2}
 -4\,{\alpha}^{6}-8\,{\alpha}^{4}{\beta}^{2}
 -4\,{\alpha}^{4}{\gamma}^{2}-4\,{\alpha}^{2}{\beta}^{4}\,.
 $$
In a subinterval of coordinates which lie closer to the origin
the shape of the potential
reaches the two symmetric minimal values
 $$
 V_{(\rm inner\ minimum)}=
 -{\alpha}^{8}-8/3\,{\alpha}^{6}{\beta}^{2}-4/3\,{\alpha}^{6}{
 \gamma}^{2}-2\,{\alpha}^{4}{\beta}^{4}
 -2\,{\alpha}^{4}{\beta}^{2}{\gamma}^{2}\,.
 $$
At these coordinates, by definition, the first derivative
of the potential vanishes while
 $$
 V_{(\rm inner\ minimum)}^{\rm (second \ derivative)}=
 16\,{\alpha}^{2}{\beta}^{4}+16\,{\alpha}^{2}{\beta}^{2}{\gamma}^{2}
 $$
is positive (and large) so that
the harmonic-oscillator approximation is validated.

At the two outer minima the value of the
potential has the slightly more complicated form
 $$
 V_{(\rm outer\ minimum)}=
 -{\alpha}^{8}-2\,{\alpha}^{4}{\beta}^{2}{\gamma}^{2}+1/3\,
 {\beta}^{8}-2/3\,{\beta}^{2}{\gamma}^{6}-8/3\,{\alpha}^{6}
 {\beta}^{2}-4/3\,{\alpha}^{6}{\gamma}^{2}-2\,{\alpha}^{4}
 {\beta}^{4}+
 $$
 $$
 +2/3\,{\beta}^{6}{\gamma}^{2}-1/3\,{\gamma}^{8}
 $$
while the value of the second derivative is positive as it should be,
 $$
 V_{(\rm outer\ minimum)}^{\rm (second \ derivative)}=
 16\,{\beta}^{4}{\gamma}^{2}+16\,{\alpha}^{2}{\beta}^{2}
 {\gamma}^{2}+32\,{\beta}^{2}{\gamma}^{4}+16\,{\gamma}^{6}+16\,{
 \alpha}^{2}{\gamma}^{4}
 $$
A comparison of these formulae with their rescaled analogues in paper I
reveals that the absence of the rescaling keeps in fact
the transparency of the formulae practically unchanged.
This means that also the evaluation of the locally supported low-lying spectra
as well as the inclusion of asymmetric perturbations remains to be
an entirely routine exercise
which can be left to the readers.

%
%

\section{The role of the asymmetry of $V(x)$\label{roas}}

%

Let us
consider the
general confining Arnold's potential
 \be
 V_{(2N+1)}^{(Arnold)}(x,a,b,\ldots,q)
 = x^{2N+2} + a\,x^{2N} + b\,x^{2N-1} + \ldots +p\,x^2 +q\,x
 \label{genai}
 \ee
and
let us simplify the discussion by
the special choice of $q=0$ and $b>0$.
In any nontrivial case this enables us to
introduce the even and odd components of the potential,
 \be
 V^{(even)}_{(N)}(x)=\frac{1}{2}\,
 \left [V^{(Arnold)}_{(2N+1)}(x)+ V^{(Arnold)}_{(2N+1)}(-x)\right ]
 = x^{2N+2} + a\,x^{2N} + c\,x^{2N-2} + \ldots+ p\,x^2\,,
 \label{faks}
 \ee
 \be
 V^{(odd)}_{(N)}(x)=\frac{1}{2}\,
 \left [V^{(Arnold)}_{(2N+1)}(x)- V^{(Arnold)}_{(2N+1)}(-x)\right ]
  = b\,x \left [x^{2M+2}+ a'\,x^{2M}
  + c'\,x^{2M-2} + \ldots +p'\,x^2 \right ]
 \,
 \label{fakl}
 \ee
where $M=M(N)=N-2$ and $a'=d/b$ and $c'=f/b$, etc.

\subsection{Spatially symmetric special cases}

Our analysis
of the related ALC/QRC phenomena will be separated in two halves.
In its first half
we will study just the models in which the original potential
is even,
$V^{(Arnold)}_{(2N+1)}(x)=V^{(even)}_{(N)}(x)$.
Along the lines indicated above
we will reparametrize the $N-$plet of its
coupling constants $a,c,\ldots,p\ $ in terms of another $N-$plet of
parameters $\alpha,\beta,\ldots,\omega$ which specify the
spatially symmetric $N-$plet of zeros
$\xi_n$
of the first derivative of the potential,
 \be
  \left [V^{(even)}_{(N)}\right ]'(x) =(2N+2)\, x
 \left( {x}^{2}-{\alpha}^{2} \right)\,  \left( {x}^{2}-{\alpha}^{2}-
 {\beta}^{2} \right) \ldots
 \left( {x}^{2}-{\alpha}^{2}-{\beta}^{2}-\ldots -{
 \omega}^{2} \right)\,.
 \label{fakkd}
 \ee
We will assume that
all of the new parameters $\alpha,\beta,\ldots,\omega$
are large, ${\cal O}(\lambda)$, $\lambda \gg 1$, making
our spatially symmetric potential composed of an
$(N+1)-$plet of the well separated deep wells.
Near the local minima, these wells may be represented,
with good precision,
by the exactly solvable harmonic-oscillator potentials,
 \be
 V(x) \sim F_n+G_n(x-X_n)^2 \
 + \ {\rm corrections}\,,
 \ \ \ \ x-X_n=
  {\rm small}\,,
 \ \ \ \ G_n>0\,.
 \label{apro}
 \ee
This observation, thoroughly analyzed in paper I,
can be given the more explicit form.

\begin{lemma}\cite{ArnoldI}
\label{lemma1}
In the dynamical regime with the large parameters
$\alpha= {\cal O}(\lambda),\beta= {\cal O}(\lambda),\ldots$
and
positions of minima $X_n = {\cal O}(\lambda^{})$,
$\lambda \gg 1$, the
coefficients in formula (\ref{apro})
will be large,
 \be
 F_n=F_n(\alpha,\beta,\ldots) = {\cal O}(\lambda^{2N+2})\,,\ \ \ \
 G_n=G_n(\alpha,\beta,\ldots) = {\cal O}(\lambda^{2N})\,,\ \ \ \
 n=0,1,\ldots,N\,.
 \label{ormag}
 \ee
\end{lemma}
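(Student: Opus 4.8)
The plan is to reduce the statement to the factorized form (\ref{fakkd}) of the first derivative, which exposes the $\lambda$-scaling of everything in sight. Write $s_j:=\alpha^2+\beta^2+\ldots$ (the partial sum of the first $j$ squares, $j=1,\ldots,N$), so that $\left[V^{(even)}_{(N)}\right]'(x)=(2N+2)\,x\prod_{j=1}^{N}(x^2-s_j)$ and the critical points are $x=0$ and $x=\pm\xi_j$ with $\xi_j^2=s_j$. Since each of $\alpha,\beta,\ldots,\omega$ is $\mathcal{O}(\lambda)$ we get $s_j=\mathcal{O}(\lambda^2)$ and hence $\xi_j=\mathcal{O}(\lambda)$, consistent with the hypothesis $X_n=\mathcal{O}(\lambda)$; moreover every difference $s_m-s_j$ is $\pm$ a partial sum of $\beta^2,\gamma^2,\ldots,\omega^2$, hence also $\mathcal{O}(\lambda^2)$. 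Expanding the product and integrating, the coefficient of $x^{2N+2-2k}$ in $V^{(even)}_{(N)}$ equals, up to the fixed rational factor $(-1)^k(N+1)/(N+1-k)$, the $k$-th elementary symmetric function of $s_1,\ldots,s_N$; it is therefore $\mathcal{O}(\lambda^{2k})$, i.e. $a=\mathcal{O}(\lambda^2),\ c=\mathcal{O}(\lambda^4),\ \ldots,\ p=\mathcal{O}(\lambda^{2N})$ (and there is no constant term, matching $V^{(even)}_{(N)}(0)=0$).

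Next I would read off the two coefficients in (\ref{apro}) from the Taylor expansion at a minimum: because $\left[V^{(even)}_{(N)}\right]'(X_n)=0$, one has $F_n=V^{(even)}_{(N)}(X_n)$ and $G_n=\frac{1}{2}\left[V^{(even)}_{(N)}\right]''(X_n)$. For $F_n$, substitute $X_n=\mathcal{O}(\lambda)$ into $x^{2N+2}+a\,x^{2N}+\ldots+p\,x^2$: the monomial of degree $2N+2-2k$ contributes $\mathcal{O}(\lambda^{2k})\cdot\mathcal{O}(\lambda^{2N+2-2k})=\mathcal{O}(\lambda^{2N+2})$, and the sum of the finitely many monomials is again $\mathcal{O}(\lambda^{2N+2})$. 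For $G_n$, differentiate the factorized form: at $x=0$ one gets $(2N+2)\prod_{j}(-s_j)=\mathcal{O}(\lambda^{2N})$; at $x=X_n=\xi_m$ the factor $\prod_j(x^2-s_j)$ vanishes, so only the term in which the derivative falls on that factor survives, giving $\left[V^{(even)}_{(N)}\right]''(\xi_m)=2(2N+2)\,\xi_m^{2}\prod_{j\neq m}(s_m-s_j)=\mathcal{O}(\lambda^{2})\cdot\mathcal{O}(\lambda^{2N-2})=\mathcal{O}(\lambda^{2N})$. Hence $G_n=\mathcal{O}(\lambda^{2N})$, which together with the estimate for $F_n$ is exactly the assertion of the lemma.

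I do not anticipate a genuine obstacle in the argument itself --- it is a bookkeeping exercise once the factorization is used. The one point that needs care is the computation of $G_n$: the drop in order from $\lambda^{2N+2}$ (the naive size of $V^{(even)}_{(N)}$ near $X_n$) to $\lambda^{2N}$ comes from a cancellation that is invisible in the expanded polynomial but automatic in the factorized form, where the identity $X_n^2=s_m$ collapses the product; so the representation (\ref{fakkd}) must be used rather than the coefficients $a,c,\ldots,p$. A secondary, more conceptual issue is that ``$\mathcal{O}(\lambda)$'' is only an upper bound, so to justify the word ``large'' in the statement --- in particular the positivity $G_n>0$ invoked in (\ref{apro}) --- one should add the harmless non-degeneracy assumption $\alpha,\beta,\ldots\asymp\lambda$ with no accidental cancellations among the finitely many monomials that occur; the same two computations then read as matching lower bounds, and can be cross-checked against the explicit $N=3$ formulas for $V_{(\mathrm{inner\ minimum})}$, $V_{(\mathrm{outer\ minimum})}$ and their second derivatives displayed in Section~\ref{dvojkaka}. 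The asymmetric case is then immediate: the odd perturbation (\ref{fakl}) shifts $X_n$, $F_n$ and $G_n$ only by $\mathcal{O}(\epsilon)$ amounts and so leaves the stated orders intact.
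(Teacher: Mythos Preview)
The paper does not actually prove Lemma~\ref{lemma1}: it is stated with the citation \cite{ArnoldI} and then immediately discussed (``This is an important result because\ldots''), so there is no in-text proof to compare against. Your argument is correct and self-contained; in fact it is precisely the general-$N$ version of the explicit computations the paper carries out for $N=3$ in Section~\ref{dvojkaka} and for $N=2$ in Section~\ref{trojka}, where the values and second derivatives of $V$ at the inner and outer minima are written out from the factorized derivative and are visibly homogeneous of degrees $2N+2$ and $2N$ in $\alpha,\beta,\ldots$. Your observation that the factorized form (\ref{fakkd}), rather than the expanded coefficients $a,c,\ldots,p$, is the right vehicle for the $G_n$ estimate is exactly the point, and your caveat that ``large'' should be read as a two-sided estimate $\asymp\lambda^{2N+2}$, $\asymp\lambda^{2N}$ under a mild non-degeneracy assumption is the honest way to articulate what the paper leaves implicit.
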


This is an important result because it excludes the possibility of
the loss of the applicability of the present ubiquitous
leading-order harmonic-oscillator approximations. In other words,
the Lemma guarantees that the local minima of the Arnold's potential
remain ``pronounced'' and will not ``flatten'' even in the small
vicinity of any Kato's complex EP degeneracy, i.e., in our present terminology,
directly in the
``catastrophic'' real-coupling ALC/QRC dynamical regime.
In parallel, the barriers between the separate minima remain,
for the same reason,
high and thick. In the light of paper I, this in fact leads to the most
important consequence concerning the wave functions: during their
``passage through the barrier'',
their decrease is
exponentially quick.

Out of the multi-well ALC-supporting ``bifurcation-admitting''
parametric domain (the size of
which is exponentially small), the
numerical value of the wave function becomes negligible
near any non-dominant minimum of the potential.
Interested readers are advised to simulate and check the phenomenon
(and, in particular, the speed of the exponential suppression
of the size of the wave function inside a high and thick barrier)
via an exactly solvable square-well model, with the typical
closed-form wave functions
sampled via example Nr. 12 on page 42 in monograph \cite{Const}).

In an immediate consequence of Lemma \ref{lemma1}, also
the values of the lowermost ground-state energies
 \be
 E_0^{(n)}(\alpha,\beta,\ldots)=F_n(\alpha,\beta,\ldots)
 +\sqrt{G_n(\alpha,\beta,\ldots)}+ \ {\rm corrections}\,,
 \ \ \ n=0,1,\ldots,N
 \label{esti}
 \ee
will be large. This observation is certainly useful for a
generalization of the $N=2$ ALC/QRC condition (see
Eq.~(\ref{grstsr}) below) to any $N$.

Thirdly, in the dynamical regime with the large parameters $\alpha=
{\cal O}(\lambda),\beta= {\cal O}(\lambda),\ldots$ the following set
of the ALC/QRC conditions
 \be
 E_0^{(m)}(\alpha,\beta,\ldots)=E_0^{(n)}(\alpha,\beta,\ldots)
 \,,
 \ \ \ \ \ m>n=0,1,\ldots,N\,
 \label{coine}
 \ee
is, due to the symmetry of the potential, over-complete. This is an
apparent puzzle which has the following remedy.
\begin{cor}
For both even $N=2J$ and odd $N=2J+1$ the ``thin'' sub-domain ${\cal
D}^{(QRC)}_{(maximal)}$ of the parameters $\alpha,\beta,\ldots$
which would be compatible with a complete ALC degeneracy of the
ground-state energies (\ref{esti}) will be determined, up to small
uncertainties, by any subset of $J$ independent constraints
(\ref{coine}). \label{2co}
\end{cor}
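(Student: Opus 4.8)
The plan is to exploit the spatial symmetry $V^{(even)}_{(N)}(x)=V^{(even)}_{(N)}(-x)$ in order to count how many of the ground-state energies $E_0^{(n)}$ are genuinely independent functions of the parameters. First I would observe that the $(N+1)$ local minima $X_0,X_1,\ldots,X_N$ of the symmetric potential come in a reflection-symmetric arrangement: reading from left to right, the minimum $X_n$ at a positive abscissa has a mirror partner $-X_n$ at the corresponding negative abscissa, and the harmonic-oscillator data $(F_n,G_n)$ in (\ref{apro}) are identical at a minimum and at its mirror image, since both the potential value and its second derivative are even functions. Consequently, among the quantities $E_0^{(0)},\ldots,E_0^{(N)}$ of (\ref{esti}), each value is attained by either one minimum (the central one, present only when $N=2J$ is even, located at $x=0$) or by a mirror pair of two minima. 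Thus the number of \emph{distinct} energy values is $J+1$ in both the even case $N=2J$ and the odd case $N=2J+1$.

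Next I would analyze the constraint system (\ref{coine}). Since the $N+1$ numbers $E_0^{(n)}$ take only $J+1$ distinct values, say $\mathcal{E}_0<\mathcal{E}_1<\cdots<\mathcal{E}_J$ generically, the full degeneracy condition ``all $E_0^{(n)}$ equal'' is equivalent to the $J$ equations $\mathcal{E}_0=\mathcal{E}_1$, $\mathcal{E}_1=\mathcal{E}_2$, \ldots, $\mathcal{E}_{J-1}=\mathcal{E}_J$, and any other equation from (\ref{coine}) is automatically implied (it merely re-states an equality between two representatives of energy levels that are already forced to coincide, or between two minima in the same mirror pair, which is an identity). This is exactly the ``over-completeness'' remedied: the $\binom{N+1}{2}$ nominal constraints collapse, modulo the exact symmetry identities, to a rank-$J$ system. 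I would then argue, using Lemma \ref{lemma1} and the explicit leading-order formula (\ref{esti}), that these $J$ equations are functionally independent as conditions on the $N$-plet $(\alpha,\beta,\ldots,\omega)$: the leading terms $F_n$ are distinct polynomials in the new parameters (they are built from distinct partial sums $\alpha^2,\alpha^2+\beta^2,\ldots$), so their pairwise differences have independent differentials on a generic point of parameter space, and the $\sqrt{G_n}$ terms plus corrections are subleading perturbations that cannot destroy this independence in the $\lambda\gg 1$ regime. Hence the solution locus $\mathcal{D}^{(QRC)}_{(maximal)}$ is, up to the announced small uncertainties, a codimension-$J$ stratum, cut out by \emph{any} choice of $J$ independent constraints from (\ref{coine}).

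The main obstacle I anticipate is making the phrase ``up to small uncertainties'' precise and controlled. The clean counting argument above is exact only at the level of the leading-order harmonic-oscillator energies $F_n+\sqrt{G_n}$; once one reinstates the tunneling corrections, the mirror-pair degeneracy $E_0^{(n)}=E_0^{(\text{mirror of }n)}$ is no longer an exact identity but is split by an exponentially small amount (the barrier-penetration splitting), and similarly the reduced constraints are only approximately equivalent to the full set. The way I would handle this is to invoke the quantitative content of Lemma \ref{lemma1} — namely that $F_n=\mathcal{O}(\lambda^{2N+2})$ and $G_n=\mathcal{O}(\lambda^{2N})$ grow polynomially while the barriers remain high and thick, so the neglected corrections are $o(1)$ relative to the gaps $\mathcal{E}_{j+1}-\mathcal{E}_j$, and then appeal to an implicit-function-theorem / transversality argument: the exact locus is a small deformation of the leading-order stratum and retains its codimension $J$, with the $J$ chosen constraints still defining it up to an $\mathcal{O}(e^{-c\lambda})$-thin neighborhood. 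Everything else — the reflection symmetry of the minima, the evenness of $(F_n,G_n)$, the arithmetic $N+1 = 2J+2$ or $2J+1$ — is routine and can be stated without computation.
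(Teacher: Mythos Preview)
The paper does not actually prove this corollary: it is stated as a ``remedy'' to the observed over-completeness of the constraints (\ref{coine}), with only the preceding phrase ``due to the symmetry of the potential'' and the subsequent dimension count (the surviving number of free parameters is $\mathcal{N}={\rm entier}[(N+1)/2]$) offered as justification. Your argument supplies precisely the details that this hint leaves implicit --- the mirror pairing of the non-central minima, the parity-dependent role of $x=0$ (a minimum when $N=2J$, a maximum when $N=2J+1$, as one reads off from the sign of $V''(0)$ in the factorization (\ref{fakkd})), and the resulting count of $J+1$ genuinely distinct values among the $E_0^{(n)}$ --- so your approach is both correct and faithful to the paper's intended line of reasoning, simply made explicit where the paper left it to the reader.
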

By constraints (\ref{coine}), in general,
the initial $N-$plet
$\alpha,\beta,\ldots,\omega$
becomes reduced, roughly, to one half (more precisely, to
${\cal N}=entier[(N+1)/2]$
parameters).
An approximate ALC coincidence of {\em all\,} of the
local-well ground-state energies will be achieved.
For illustration we may recall
the most elementary $N=2$ sample ${\cal D}^{(QRC)}_{(maximal)}
=\{(\alpha,\beta)|\beta=\sqrt{2} \alpha \pm \rm small\ corrections\}$
of
the ``thin'' domain of Corollary \ref{2co} above.

At any $N$ the ALC-related measurable probability density
$\varrho(x)=\psi^*(x)\psi(x)$
will be
spread over all of the vicinities of the minima $X_n$.
The corresponding QRC equilibrium
characterized by the almost fully degenerate spectrum is highly unstable
of course.
Even the smallest
perturbation
will
move the
parameters out of the equilibrium domain ${\cal D}^{(QRC)}$,
causing a collapse and opening a transition, presumably,
to a new, stable (or
at least less unstable) equilibrium.

\subsection{Slightly asymmetric potentials}

In the preceding paragraph
we implicitly
assumed that the perturbations
causing an unfolding of the maximal
ALC/QRC collapse do not violate the spatial symmetry of
the potential. In such an arrangement, the control of the
relocalizations may be expected to
proceed, at arbitrary $N>2$,
along the lines which would parallel the same pattern of
explicit predictions.
This means that the changes of the ``outer'' parameters
would
influence, predominantly, the shifts
of the ``outer'' local-well minima and/or of the associated
low-lying ``outer'' subspectra.

No similar intuitive {\it a priori\,} estimate of the
behavior of the subspectra
exists in the full-fledged asymmetric models with $q=0$.
At any $N$,
the analysis can still be simplified
by the decomposition
 \be
 V_{(2N+1)}^{(Arnold)}(x,a,b,\ldots,p)= V^{(even)}_{(N)}(x)+
  V^{(odd)}_{(N)}(x)\,
 \label{gena}
 \ee
of the general $N \geq 2$ Arnold's potential (\ref{genai}) with $q=0$.
Thus, let us assume that the even component $V^{(even)}_{(N)}(x)$
of the potential
becomes fixed and fine-tuned to the above-described,
maximally ALC unstable equilibrium considered, for the sake of
definiteness, just in the ground-state regime. Then
we may expect that
a rich menu of the
QRC transitions to an alternative stable equilibrium will be
realizable via the odd component of the potential
with a sufficiently small $b=\epsilon$,
 $$
 V^{(odd)}_{(N)}(x)=\epsilon\,x\, V^{(even)}_{(N-2)}(x)
 $$
or with $b=0$ and with a sufficiently small $d=\epsilon$,
 $$
 V^{(odd)}_{(N)}(x)=\epsilon\,x\, V^{(even)}_{(N-4)}(x)\,,
 $$
etc.
Along these lines, the trick is that we may ask for the
presence
of pronounced minima and maxima not only
in the dominant even-parity potential $V^{(even)}_{(N)}(x)$
but also in the small, ${\cal O}(\epsilon)$ contributions
coming from the odd
function correction $V^{(odd)}_{(N)}(x)=
\epsilon\,x\,V^{(even)}_{(M)}(x)$ with $M=N-2$ or $M=N-4$, etc.
Once such a small ${\cal O}(\epsilon)$ term is
added to the
dominant but sensitive, ALC-fine-tuned $V^{(even)}_{(N)}(x)$
of Eq.~(\ref{apro}), it may cause
enhancements and/or suppressions of the
dominant-potential wells
via the new, movable ${\cal O}(\epsilon)$
wells and/or barriers in $V^{(odd)}_{(M)}(x)$.
This might facilitate the selection and control of
the ultimate QRC equilibria as long as
the coordinates $x=X_n=X\,\lambda>0$
of the local minima as well as
the large parameters $F_n=F\,\lambda^{2N+2}$ and $G_n=G\,\lambda^{2N}$
in Eq.~(\ref{apro})
are of the order of magnitude as specified in Lemma \ref{lemma1}.

One can summarize that our task is reduced to the
analysis of the adaptability of the properties
of the spatially antisymmetric perturbation component
$\epsilon\,x\,V^{(even)}_{(M)}(x)$
of the full potential.
It is important that in such a component the
maximal power $x^{2M+3}$
can be much smaller
than the
maximal power $x^{2N+2}$ characterizing the full, perturbed
Arnold's potential.
Hence, the generic multi-well
form of the perturbation
can be kept as simple as possible.
In particular, its freely variable
local extremes could be more easily matched, added to, or subtracted from, their
unperturbed partners.
Decisively, such a matching
can be facilitated by the following observation.

\begin{lemma}
\label{lemma3}
In a vicinity of the minimum $\lambda\,X>0$ of
$V(x) = \lambda^{2M+2}\,F + \lambda^{2M}\,G\,(x-\lambda\,X)^2$
with  $G>0$,
the minimum of the third-order polynomial
$x\,V(x)$  in $x$
lies at $x_0=\lambda\,(1+\delta)\,X\,$ where the shift
is $\lambda-$independent,
 \be
 \delta=-\frac{F}{G\,X^2+X\,\sqrt{G^2\,X^2-3\,F\,G}}\,.
 \ee
\end{lemma}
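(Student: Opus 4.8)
The plan is to treat $x\,V(x)$ as an explicit cubic in $x$, locate its critical points by solving a quadratic, and then single out the one that sits in the vicinity of $\lambda\,X$.

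First I would substitute the quadratic model into the product and rescale. Writing $W(x):=x\,V(x)=\lambda^{2M+2}F\,x+\lambda^{2M}G\,x\,(x-\lambda X)^2$ and then setting $x=\lambda\,u$, one obtains $W(\lambda u)=\lambda^{2M+3}\,g(u)$ with $g(u)=F\,u+G\,u\,(u-X)^2$, a polynomial whose coefficients no longer contain $\lambda$. Since the overall prefactor $\lambda^{2M+3}$ is positive and $\lambda$-independent, the critical points of $W$ are exactly the images $x=\lambda\,u$ of the critical points of $g$; in particular the relative shift $\delta=x_0/(\lambda X)-1=u_0/X-1$ is automatically $\lambda$-independent, which settles that part of the claim.

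Next I would compute $g'(u)=F+G\,(3u^2-4Xu+X^2)=F+G\,(u-X)(3u-X)$ and solve $g'(u)=0$, i.e. the quadratic $3G\,u^2-4GX\,u+(GX^2+F)=0$, whose roots are
 \be
 u_\pm=\frac{2GX\pm\sqrt{G^2X^2-3FG}}{3G}\,.
 \ee
To identify which root is the sought minimum I would invoke Lemma~\ref{lemma1}: in the deep-well regime the value $F$ of the potential at its local minimum is negative, so $G^2X^2-3FG\ge G^2X^2>0$, the two roots are real and distinct, and $u_+\to X$ as $F\to0$; the second-derivative test $g''(u_\pm)=2G\,(3u_\pm-2X)=\pm\,2\sqrt{G^2X^2-3FG}$ then confirms that $u_+$ is the local minimum (and $u_-$ a local maximum). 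Hence $x_0=\lambda\,u_+$ is the critical point of $x\,V(x)$ lying near $\lambda\,X$.

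Finally I would extract $\delta$ by rationalizing:
 \be
 u_+-X=\frac{\sqrt{G^2X^2-3FG}-GX}{3G}
 =\frac{-3FG}{3G\left(\sqrt{G^2X^2-3FG}+GX\right)}
 =\frac{-F}{GX+\sqrt{G^2X^2-3FG}}\,,
 \ee
so that $\delta=(u_+-X)/X=-F/\left(GX^2+X\sqrt{G^2X^2-3FG}\right)$ and $x_0=\lambda\,u_+=\lambda\,(1+\delta)\,X$, as claimed. The algebra is elementary throughout; the only point that needs a little care is the sign-and-reality discussion that selects the root $u_+$ and guarantees that $x_0$ stays inside the neighbourhood where the harmonic model $V(x)\approx\lambda^{2M+2}F+\lambda^{2M}G\,(x-\lambda X)^2$ is legitimate — and this is precisely where the content of Lemma~\ref{lemma1} (negativity of $F$ together with the largeness and positivity of $G$) enters.
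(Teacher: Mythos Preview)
Your proof is correct and follows essentially the same approach as the paper: the paper's own proof is a one-sentence sketch stating only that $x\,V(x)$ has two local extremes given as roots of a quadratic, whereas you carry out the quadratic computation explicitly, identify the correct root via the second-derivative test, and rationalize to obtain the stated $\delta$. The only minor remark is that Lemma~\ref{lemma1} as stated gives the order of magnitude of $F$ but not its sign; the negativity of $F$ you invoke comes from the surrounding deep-well assumption in the text rather than from that lemma itself.
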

\begin{proof}
The odd function $x\,V(x)$ of $x$ has two local extremes (viz., a local
maximum and a local minimum), with the coordinates given as roots of
quadratic equation.
\end{proof}

This result shows that
in a small vicinity of a preselected minimum
of a dominant symmetric
part of a slightly asymmetric Arnold's potential,
an enhancement or suppression of this minimum can be
achieved via an
antisymmetric {\it ad hoc\,} perturbation
$V^{(odd)}_{(N)}(x)$, the global shape of which
can be controlled by an $M-$plet of independent parameters
where $M$ can be chosen much smaller than $N$.


\section{Quantum catastrophes\label{trojka}}

\subsection{Symmetric triple-well potentials}

Schr\"{o}dinger equation
 \be
 \left[- {\Lambda}^2\,\frac{d^2}{dx^2} +
 V^{(butter\!fly))}(x)
 \right ] \psi_n (x) =E_n\,
 \psi_n (x)
  \,, \ \ \ \ \
 {\Lambda}^2={\hbar^2}/({2\mu})\,, \ \ \ \ \
 n=0, 1, \ldots\,
 \label{6seh}
 \ee
with the four-parametric potential
 \be
 V^{(butter\!fly)}(x)
 = x^6 +a\,x^4 +b\,x^3  +c\,x^2 +d\,x\,
 \label{butterb1}
 \ee
is simpler to solve in its spatially symmetric
version where $b=d=0$.
The basic
technical aspects of such a
special case
were discussed in paper I.
In a slightly different notation
let us now summarize these results briefly.
First, in the spirit of {\it loc. cit.},
potential
 \be
 V(x)
 = x^6 +a\,x^4 +c\,x^2 \,
 \label{butter00}
 \ee
will only be considered here in its
most interesting
deep-triple-well dynamical regime.
Under this assumption,
the first derivative
of the potential can be factorized
in terms of
the coordinates of the extremes of $V(x)$
or, in other words,
in terms of suitable real $\alpha$ and $\beta$,
 \be
 V'(x)
 = 6\,x^5 +4\,a\,x^3 +2\,c\,x=
 6\,x\,(x^4+\frac{4\,a}{6}\,x^2+\frac{2\,c}{6}\,x
 )=6\,x\,(x^2-\alpha^2)\,(x^2-\alpha^2-\beta^2)
  \,,
 \label{derbutter00}
 \ee
This induces a reparametrization of the original couplings,
 \be
 a=a(\alpha,\beta)=-{3}\,(\alpha^2+\beta^2/2)\,,\ \ \ \
 c=c(\alpha,\beta)=3\,\alpha^2\,(\alpha^2+\beta^2)\,.
 \label{[10]}
 \ee
The pronounced, deeply triple-well
shape of the potential
possessing
two high and thick inner barriers
will be achieved when
and only when the coordinates of the extremes are chosen
sufficiently large,
$\alpha^2 \gg {\Lambda}^2$ and $\beta^2 \gg {\Lambda}^2$.
Thus, in units
such that ${\Lambda}^2=1$ we have to have
$\alpha \gg 1$ and $\beta \gg 1$.

\subsection{Ground states and the avoided level crossing phenomenon}

The latter, phenomenologically motivated assumption
simplifies the approximate construction of bound states.
In the case of the
dominance of the central attraction
the approximate low-lying
spectrum
will acquire the elementary
leading-order form
 \be
 E_n^{\rm (central)}=(2n+1)\,\sqrt{c(\alpha,\beta)}
 + {\rm higher\ order\  corrections}\,,
 \ \ \ n=0, 1, \ldots\,.
 \label{centrsp}
 \ee
The alternative assumption of the
dominance of the off-central attraction
yields the almost degenerate
energy-level doublets
 \be
 E_m^{\rm (even/odd)}=V(\sqrt{\alpha^2+\beta^2})
 +(2m+1)\,\Omega + {\rm higher\ order\  corrections}\,,
 \ \ \ m=0, 1, \ldots\,,
  \label{offcent}
 \ee
 $$
 \Omega=\sqrt{V''(\sqrt{\alpha^2+\beta^2})/2}\,,\ \ \
 V''(\sqrt{\alpha^2+\beta^2})=
12\,\alpha^2\,\beta^2+12\,{\beta}^{4}\,$$
which are, naturally, never
strictly degenerate.

As long as we have
 $
 V(\sqrt{\alpha^2+\beta^2})={\alpha}^{6}
 +3/2\,{\alpha}^{4}\beta^2-1/2\,{\beta}^{6}
 $
the ground-state
special case of formula (\ref{offcent}) reads
 \be
 E_0^{\rm (even/odd)}={\alpha}^{6}
 +3/2\,{\alpha}^{4}\beta^2-1/2\,{\beta}^{6}
 +\sqrt{6\,\alpha^2\,\beta^2+6\,{\beta}^{4}}
 + {\rm   corrections}\,.
 \label{offfin}
 \ee
We may
set $\beta^2=\mu^2\,\alpha^2$
with, say, $\mu={\cal O}(1)$. Asymptotically
(i.e., in the regime of very large parameters
$\alpha \gg 1$ and $\beta \gg 1$)
we then deduce,
from formula (\ref{centrsp}),  that
$E_n^{\rm (central)}={\cal O}(\alpha^2)$.
In contrast,
Eq.~(\ref{offfin}) implies
that $E_m^{\rm (even/odd)}={\cal O}(\alpha^6)$.
This comparison reveals the clearly dominant behavior of the
off-central minimum of the potential.

In the leading-order approximation the ``quantum-catastrophic''
instant of
transition between the
central and off-central dominance
of the probability density of the ground states will be
merely
dictated by the trivial constraint
$V(\sqrt{\alpha^2+\beta^2}) ={\cal O}(\alpha^2)$,
i.e., up to corrections, by the elementary equation
$V(\sqrt{\alpha^2+\beta^2}) =0$.
This enables us to
deduce that the quantum relocalization catastrophe
becomes controlled by $\beta$ and
realized at
$\mu^2\approx \mu_\infty=2$, i.e.,
along the line of $\beta \approx \sqrt{2}\,\alpha$.
Only in the next order approximation one has to impose
the explicit
ALC condition
 \be
 E_0^{\rm (central)}(\alpha,\beta) = E_0^{\rm (even/odd)}(\alpha,\beta)
 + \rm small \ corrections\,.
 \label{grstsr}
 \ee
After an appropriate change of the
notation conventions,
a more detailed numerical
analysis of its solutions may be found
described in paper I.


\section{Asymmetric $N=2$ illustration\label{petka}}

The turn of attention to the spatially asymmetric general version
of the butterfly \cite{Thom} potential
 \be
 V(x)
 = x^6 +a\,x^4 +b\,x^3 +c\,x^2+d\,x\,
 \label{butter00}
 \ee
opens a number of possibilities of reaching an asymmetric
quantum catastrophe.
A small change of parameters could now
cause an abrupt jump of the
dominant part of the observable
probability density (i.e., of
function $\rho(x)=\psi^*(x)\psi(x)$)
between the central and {\em strictly one\,} of the
off-central local minima.

\subsection{Simplification of mathematics: $d=0$}

For illustration, for the sake of brevity,
let us set $d=0$ and keep
the asymmetry
controlled by the single coupling constant $b \neq 0$.
In the limit of $ b \to 0$ the left-right-symmetric set
of the zeros
of polynomial (\ref{derbutter00})
 \be
 \{ -\sqrt{\alpha^2+\beta^2},-\alpha,0,\alpha,
 +\sqrt{\alpha^2+\beta^2} \}\,
 \label{theset}
 \ee
is known. Thus, for our present methodical purposes it will suffice
to keep the asymmetry
small. Using,
without any loss of generality, a positive $b=\epsilon > 0$,
the set of extremes (\ref{theset}) becomes modified,
in general, as follows,
 $$
 \{ -\sqrt{\alpha^2+\beta^2}-\epsilon\,p^2(\epsilon),-\alpha
 +\epsilon\,q^2(\epsilon),0,\alpha+\epsilon\,u^2(\epsilon),
 \sqrt{\alpha^2+\beta^2}-\epsilon\,v^2(\epsilon) \}\,.
 $$
The weakly $\epsilon-$dependent shifts
$p^2(\epsilon)$, $q^2(\epsilon)$, $u^2(\epsilon)$
and $v^2(\epsilon)$ must be all
positive. Their values can easily be
determined from their definition $V'(x)=0$,
i.e., from equation
 \be
 V'(x)= 6\,x^5 +4\,a\,x^3 +3\,\epsilon\,x^2+2\,c\,x
 =6\,x\,(x^2-\alpha^2)\,(x^2-\alpha^2-\beta^2)+3\,\epsilon\,x^2=0
  \,.
 \label{derbut1}
 \ee
Step by step, the
insertion of the first-order
ansatz $x=\alpha + u^2\epsilon$ using
abbreviation $u^2=u^2(0)$
in Eq.~(\ref{derbut1}) will yield the relation
 \be
 V'(x)
  \sim
 2\,(x^2-\alpha^2)\,(x^2-\alpha^2-\beta^2)+\epsilon\,x
 \sim -4\,u^2\,\,
 \beta^2
 +1
 ={\cal O}(\epsilon)
  \,.
 \label{derbut1u}
 \ee
Its leading-order solution is $u^2=1/(4\,\beta^2)$. Similarly,
the
insertion of ansatz $x=\sqrt{\alpha^2+\beta^2}-v^2\epsilon\,$
in the same equation yields an analogous
solution $v^2=1/(4\,\beta^2)$.
Along the same lines
we also obtain $q^2=1/(4\,\beta^2)$ and, finally,
$p^2=1/(4\,\beta^2)$. If needed, a systematic
evaluation of the next-order corrections would be
lengthier but also routine,
yielding
 $$
 u^2(\epsilon)=
 \frac{1}{4\,\beta^2}
 +\left ({\frac {{\beta}^{2}+4\,{\alpha}^{2}}
 {32\,\alpha\,{\beta}^{6}}}
 \right)\,\epsilon + {\cal O}(\epsilon^2)
 $$
etc.

\subsection{Asymmetry treated as a perturbation}

We are now prepared to analyze the above-mentioned asymmetric
relocalization catastrophe as a scenario during which, after a small
change of parameters, the mechanism of quantum tunneling would force
the dominant part of the observable probability density  $\rho(x)$
to perform an abrupt jump between the central and the leftmost local
minimum of the potential. In other words, the initial localization
of the dominant part of $\rho(x)$ in the center will get suppressed
while the initially negligible component of $\rho(x)$ near the left
minimum will become abruptly dominant.

In a way explained in paper I, the explicit demonstration that the
probability density indeed goes through a sharp change at the
catastrophe is based on the very initial formulation of the problem
in which the wave functions of the system are well approximated by
the harmonic-oscillator states near all of the local minima. The
task of an approximate quantitative description of such a jump is
then feasible, simplified further by the fact that the inclusion of
asymmetry does not modify the central candidate (\ref{centrsp}) for
the low-lying spectrum.

The construction of its left-well alternative is also not too
complicated because its dominant component is represented just by
the deep and $\alpha-$sensitive ${\cal O}(\alpha^6)$ minimum of the
potential. Its asymmetry-dependence {\it alias\,}
$\epsilon-$dependence can be evaluated in its closed leading-order
form to read
 \be
 V(-\sqrt{\alpha^2+\beta^2}-\epsilon\,p^2(0) )
 = V(-\sqrt{\alpha^2+\beta^2})
 -\epsilon\,(\sqrt{\alpha^2+\beta^2})^3
 +{\cal O}(\alpha^2) +{\cal O}(\epsilon^2)\,.
 \label{levice}
 \ee
Thus, the only rather lengthy calculation is needed for the
evaluation of the level-spacing parameter corresponding to the
leftmost local well. One obtains the formula
 $$
 V''(-\sqrt{\alpha^2+\beta^2}-\epsilon\,p^2(\epsilon) )
 = V''(-\sqrt{\alpha^2+\beta^2})
 +
 3\,{\frac {\sqrt {{\alpha}^{2}+{\beta}^{2}}
 \left( 4\,{\alpha}^ {2}+5\,{\beta}^{2} \right) }
 {{\beta}^{2}}}\,\epsilon+{\cal O}(\epsilon^2)\,
 $$
which implies that the ${\cal O}(\epsilon)$ correction to the
level-spacing parameter $\Omega$ is of order ${\cal O}(\alpha)$,
i.e., inessential (cf. Eq.~(\ref{offcent}) above). Only the first
two terms of Eq.~(\ref{levice}) remain relevant for the proof of the
existence of the asymmetric relocalization catastrophe.

Its explicit approximate localization proceeds via relation
 \be
 {\alpha}^{6}+3/2\,{\alpha}^{4}{\beta}^{2}-1/2\,{\beta}^{6}
 =({\alpha^2+\beta^2})^{3/2}\,\epsilon\, + \rm corrections
 \label{tare}
 \ee
i.e.,
 \be
 \epsilon=-{\frac{1}{ 2}\,{\alpha}^{3}\,\delta} \,\sqrt{ 3+\delta } \
 + \rm corrections\,.
 \label{cubi}
 \ee
Here, the (small) value of $\epsilon$ and the (large) value of
$\alpha$ should be interpreted as an input information about the
potential (i.e., about dynamics). Once we return to the ansatz
$\beta^2=\mu^2\,\alpha^2$ with a (small) variable $\delta$ in
$\mu^2=2 +\delta$ we obtain the desirable solution of the
catastrophe-determining Eq.~(\ref{tare}) in its linearized
leading-order form
 \be
 \delta=-\frac{2\,\epsilon}{\sqrt{3}\,\alpha^3}\,.
 \label{tosol}
 \ee
This formula indicates that the impact of the asymmetry
of the potential
is suppressed by the factor of $\alpha^{-3}$.
Still, in the light of Eq.~(\ref{[10]}) where
$a={\cal O}(\alpha^2)$ and $c={\cal O}(\alpha^4)$, an
asymptotic
enhancement of the magnitude of the symmetry-violating
couplings up to $b=\epsilon={\cal O}(\alpha^3)$
seems appropriate.

In such an extended dynamical regime,
a return to the more precise
implicit cubic-equation definition (\ref{cubi})
of the (negative) critical shift $\delta={\cal O}(1)$
would be necessary.
The resulting, more visible
asymmetric quantum relocalization catastrophe
will be
characterized by
a substantial, non-negligible decrease
of the critical value of the
ratio $\mu^2=\beta^2/\alpha^2$ of the two
dynamical parameters of the model.

\section{Excited states ($N=2$)\label{ctyrka}}

The above-discussed conclusions
concerning the
existence of the ground-state
relocalization transitions
in the spatially symmetric potentials
reconfirm the results
of paper I.
We only have to keep in mind here that
in {\it loc. cit.} the meaning of some of the symbols
was different.
Irrespectively of that, in both cases
one works with large $\alpha \gg 1$ and $\beta \gg 1$
so that the higher-order
anharmonicities remain small.
The
pronounced form of the minima and maxima
of the potential
is guaranteed. This
opens the way towards an innovation in which
the new
``quantum-catastrophic''
relocalization transitions involve also
the
excited states.

In a concise explanation of the
emergence of such a
new class of
observable abrupt-change phenomena
we have to point out, first of all, that in the
deep-triple-well regime of the model with $N=2$
the level spacings
$\sqrt{c(\alpha,\beta)} \sim E_0^{\rm (cental)}$
and
$\Omega(\alpha,\beta)
\sim E_0^{\rm (even/odd)}-V(\sqrt{\alpha^2+\beta^2})$
are both of the order of magnitude
${\cal O}(\alpha^2)$, i.e.,
subdominant but still {\em large} and practically
excitation-independent.
For this reason, also
the separate excited states can still be well identified
experimentally.
In the language of the
relocalization theory, its above-outlined
ground-state version
admits an immediate and meaningful upgrade,
therefore. In particular,
the ALC-related Eq.~(\ref{grstsr}) describing
the instant of bifurcation between the central and off-central
localization of the probability density
$\varrho(x)=\psi^*(x)\psi(x)$
becomes complemented by an analogous
excited-state-matching extension
which involves {\em any pair\,} of the bound states lying
in the low-lying spectrum of the system,
 \be
  E_m^{\rm (even/odd)}(\alpha,\beta)=
 E_n^{\rm (central)}(\alpha,\beta)
 + {\rm small \ corrections}\,,
 \ \ \ m,n=0,1,\ldots,M_{\max}\,.
 \label{listsr}
 \ee
At every pair of the main quantum numbers $m$ and $n$
the
search for the solutions remains straightforward, albeit just numerical.

\begin{table}[h]
\caption{The localization of bifurcations
$E_m^{\rm (even/odd)}(\alpha,\beta)=E_n^{(central)}(\alpha,\beta)$
as defined by Eq.~(\ref{listsr}).
The search is performed
at a fixed $\alpha=4\ $ and with a variable $\beta=\mu\,\alpha\ $
or rather $\mu=\sqrt{2+\delta}\,$ with a small $\delta$.
The critical ALC values $\,\delta=\delta(m,n)\,$
were found numerically.}
\vspace{0.5cm}
 \label{owe}
\centering {\small
\begin{tabular}{||c||c|c||}
\hline \hline
  \multicolumn{1}{||c||}{$\delta(m,n)$}
    &m&n
     \\
 \hline
 \hline
    -0.00262&1&3\\
    -0.00261&0&1\\
    \hline
        0.00260&0&0\\
  0.00261&1&2\\
    \hline
  0.00781&1&1\\
  0.00783&2&3\\
    \hline
  0.01299&1&0\\
  0.01302&2&2\\
    \hline
    0.01818&2&1\\
  0.01823&3&3\\
    \hline
  0.02332&2&0\\
  0.02338&3&2\\
   \hline \hline
\end{tabular}}
\end{table}

Table \ref{owe}
offers an illustrative sample of the
corresponding search for the ALC coincidences
obtained as solutions of Eq.~(\ref{listsr})
at $\alpha=4$.
In the Table we consider a variable $\beta=\mu\,\alpha\ $
where the ratio $\mu=\sqrt{2+\delta}\,$
or, more precisely, the value of $\delta$ is a variable parameter.
What one immediately notices is the smallness of
the
deviations
$\delta=\delta(m,n)$ of the numerically evaluated
critical ratios $\mu^2=\beta^2/\alpha^2$
from their unique asymptotic value $\mu_\infty^2={2}$.
Theoretically, this smallness is easily explained
by the proportionality of
the off-central energy values to the local bottom
$V(\sqrt{\alpha^2+\beta^2})$ of the potential. The
decrease of this minimum is dictated by its
very large component
$ -\beta^6/2$ so that the shift
$\delta$ itself
must remain small.
From the point of view of experimentalists,
the high sensitivity  of the
process of the
relocalization
of the probability density
to a parameter
should be interpreted as an abrupt change
of the topology of the system
near a critical $\delta=\delta(m,n)$.
Hence,
it makes sense to speak about
the phenomenon of a quantum relocalization catastrophe
even when the excited states are concerned.

The second striking
feature of the excitation-dependent
relocalization catastrophes which occur at the
$m-$ and $n-$ dependent parameters
$\mu^2=\mu_\infty^2+\delta(m,n)$
may be seen, in Table \ref{owe}, in their approximate
pairwise degeneracy,
with $\delta(1,3)\approx \delta(0,1)$, etc.
Again, the theoretical explanation of the
phenomenon remains straightforward.
The process of the matching of the separate levels
as prescribed by Eq.~(\ref{listsr}) is basically controlled by the
decrease of their off-central partners which is,
roughly speaking, proportional to the increase of $\mu$.
Once we
use the asymptotically correct parameter $\mu=\mu_\infty$
and evaluate the asymptotically correct spring constants
(i.e., level-distances)
$\sqrt{c_\infty}=48$ and $\Omega_\infty=96$, we reveal the
proportionality in disguise,
$\Omega_\infty= 2\,\sqrt{c_\infty}$.

This would imply the exact pairwise degeneracies of the shifts.
Such a phenomenon (reflecting the low-degree polynomial
nature of the interaction)
might deserve an experimental
verification and/or simulation.
In the context of pure theory it is really remarkable that
in our illustrative Table~\ref{owe} such a degeneracy is only
very weakly broken by the nonlinearity of the equations.
We can deduce that even after
the consequent
next-order inclusion of the nonlinearity
in both of our two topologically different dynamical regimes
we will still have,
with good precision, the commensurate level spacings
$\Omega(\alpha,\beta) \approx 2\,\sqrt{c(\alpha,\beta)}$.

\section{Summary\label{soummar}}

In technical terms, our present message has two parts. In one, we
turned attention to the excited states and we strengthened the claim
of paper~I that the mathematical complications introduced by the use
of quantum dynamics still remain surmountable. Secondly, we amended
and completed the results of paper I by an outline of a perturbative
inclusion of the ``missing'', parity-violating components in the
interactions. In this manner we extended the class of the admissible
and tractable models to the whole set of the slightly asymmetric
confining Arnold's potentials. One can say that in spite of the
existence of multiple questions which are still open
(see Appendices A and B for their sample), the overall
picture of the ALC phenomenon formulated in the language of the
low-lying quantum bound states supported by Arnold's potentials is
mathematically consistent.

It is worth adding that the
latter result
was based on a tacit assumption of the smallness
of the antisymmetric part of the potential.
Such an assumption
has two aspects. On positive side it has been shown
useful and efficient in the vicinity of the (approximate)
complete, $(N+1)-$tuple ALC degeneracy where
the branching of the unfolding scenarios becomes maximally
sensitive to the tiniest changes of the parameters.
On negative side, the perturbative tractability of
the other, less extreme ALC degeneracies
were not discussed and their analysis
remains an
open question.
Far from the exceptional dynamical regime
of a maximal ALC degeneracy, unfortunately,
our present perturbation-approximation
technique might fail and
would require an independent study.
This is a serious uncertainty which could
lead to some {\it a posteriori\,}
limitations of our constructive
perturbation approach in applications.

The currently
missing exhaustive classification of
all of the alternative
QRC re-arrangement scenarios might, indeed,
require the use of some alternative,
non-perturbative methods in the future.
At present we must admit that in
such a case it may happen that one
would have to sacrifice the simplicity of the picture.
Perhaps, an entirely different
treatment of the ALC phenomena
will be needed for the quantum Arnold's
potentials characterized
by an extreme spatial asymmetry.

\newpage

\newpage

\section*{Appendix A: Physics behind the Arnold's potentials: Open questions
and further research}

Naturally, the complexity of the implementation of the basic ideas
of our present paper will increase with the growth of $N$. Via
several tentative preliminary approximative calculations we revealed
that, fortunately, the rate of this increase remains reasonable.
First of all, the proofs of the existence of the topology-changing
quantum catastrophes as well as of the localization of the
corresponding ALC/QRC instants do not change too much with the
growth of $N$. Secondly, the use of any commercially available
symbolic manipulation software appears remarkably suitable for the
purpose. Last but not least, the results of the computer-assisted
algebraic manipulations still remain comparatively compact and
transparent, in spite of being less suitable for the presentation in
print, mainly due to their increasing length. A modified strategy of
their analysis as well as presentation will be needed, in
particular, after the present turn of attention to the asymmetric
potentials $V(x) \neq V(-x)$ and to the low-lying excited states.

Technically, the analysis and predictions proved perceivably less
difficult in the context of paper~I where it has been
conjectured and demonstrated that the Thom's classification of
catastrophes (meaning the abrupt changes of a system's equilibria
after a very small change of its parameters) could consistently be
paralleled in quantum world. The basic idea was of a topological
nature, and the construction task has been facilitated by the
assumption that the parities of the benchmark potentials were even
[cf. their sample in Eq.~(\ref{spefa})]. Furthermore, attention was
restricted to a subfamily of $V(x)$s with the shapes characterized
by the multiple pronounced and deep minima separated by the multiple
high and thick barriers.

A persuasive and systematic qualitative classification has been then
obtained thanks to the acceptance of several auxiliary technical
constraints. The main one can be seen in the attention paid, more or
less exclusively, to the ``strong-coupling'' dynamical regime. In
its framework, both the necessary spectral analysis and the
subsequent predictions of the observable topological effects
characterized, first of all, by the abrupt relocalizations of the
probability densities appeared simplified, feasible and, in
principle, detectable in the laboratory.

In paper I the main observation was that what remains uninfluenced
by the ubiquitous tunneling are the topologically non-equivalent
probability densities $\varrho(x)$. The first nontrivial sample of
such a QRC scenario proved provided by the butterfly potential model
with $N=2$ barriers. In this system  strictly two topologically
non-equivalent equilibria were identified, with either the centrally
dominated or the off-centrally dominated probability density
$\varrho(x)$. On this background it has been claimed that at least
some of the subsequent studies of quantum ALC-related phenomena
might still be based on Schr\"{o}dinger equations with analytic and
polynomial interactions. Soon, the latter expectations were
confirmed by an extension of the analysis to several more or less
realistic two-dimensional \cite{Znojil2d} and three-dimensional
\cite{Znojil3d} descendants of the Thom's one-dimensional cusp and
butterfly potentials of Eq.~(\ref{spefa}). In this context, our
present extension of the scope of the analysis to the spatially
asymmetric Arnold's potentials and to the identification of the ALC
phenomena in the excited quantum states appeared well
phenomenologically motivated, indeed.

In the earlier attempts at the continuation and extension of the
project we were, originally, not too successful. We tried to use the
non-perturbative methods and we wished to extend the theory to the
general asymmetric Arnold's potentials, but we failed. The crisis
had only been overcome when we imagined that the bifurcation of the
maximally degenerate ALC equilibrium could rather straightforwardly
be still described via a restricted, controllably small modification
of the potential. This proved to be one of the key ideas which
inspired our present study and which is also to be pursued in our
future research.

In a more specific preliminary remark let us mention that the most
important open question and theoretical challenge is twofold. In
both of its forms one will encounter a generalization of the theory
based on a relaxation of our present ``traditional'' assumption of
having the Arnold's potentials {\em real}, i.e., of having the
corresponding Hamiltonians {\em self-adjoint}. The two forms of such
a relaxation of constraints may be seen as motivated by the recent
growth of popularity of the so called pseudo-Hermitian reformulation
of the model-building strategy in quantum mechanics (see, e.g.,
reviews \cite{Geyer} or \cite{ali}).

From the point of view of our future ``quantum-catastrophic''
studies and projects, we will have to find a contact with both of
the two conceptually different implementations of the latter
reformulation of quantum theory dealing, respectively, with the so
called open quantum systems (OQS, a monograph \cite{Nimrod} might be
cited in this case) and with the (conceptually very different)
closed quantum systems (CQS) as considered in reviews \cite{Geyer}
and \cite{ali} or \cite{Carl,book}. Indeed, from the different
perspectives, both of these OQS and CQS approaches share the
interest and emphasis put upon the concept of the Kato's exceptional
points. At the same time, we may only repeat that the ALC-EP
correspondence will always play a key role in any future complete
quantum theory of catastrophes.

\section*{Appendix B: Several open mathematical questions
behind the Arnold's potentials: Semiclassical approximation, etc}

One of the explanations of the success and popularity of the Thom's
classification of elementary non-quantum catastrophes lies in its
simplicity. Indeed, the emphasis upon the qualitative aspects of the
classical evolution patterns made his theory universal. In parallel,
the underlying intuitive perception of the concept of the
equilibrium rendered many of its applications straightforward. {\it
A priori\,} one would expect that both of these merits of the Thom's
theory (viz., its simplicity and a straightforward applicability)
must necessarily be lost after quantization.

In this sense, our present paper can be read as advocating an
opposite opinion. In a way co-supported by the results of
Ref.~\cite{ArnoldI} we may claim that up to the trivial single-well
and single-barrier exceptions (where any observable change remains
smooth, due to the tunneling in the second case), all of the
multi-parametric confining Arnold's potentials prove able to serve
as nontrivial benchmark realizations of the ALC-related quantum
relocalization processes. In a way caused by a tiny change of the
parameters these processes were shown to exhibit all of the
characteristic features (like, in particular, the abruptness) of the
popular classical catastrophes.


After the latter conclusion one has to ask the natural question
concerning the dependence of the ALC phenomenon on a hypothetical
strengthening $V(x) \to g\,V(x)$ of the Arnold's potential. The
question is legal because in Eq.~(\ref{pefa}) which defines the
potential the leading-power coefficient is equal to one. Indeed,
once we fixed the units (such that ${\hbar^2}/({2\mu})=1$) we lost
the variability of the parameter which could control the
semiclassical limit of the model. In other words, we fixed the arrow
of our inspiration (from classical to quantized) and we lost the
opportunity of studying the quantum-classical correspondence of the
systems under consideration. In the opposite direction of moving
from quantum to classical, the idea of reduction of the quantum
picture to its classical limit could find its implementation, in the
future research, in a way based on the variational \cite{Turbiner},
semiclassical \cite{Simon} or even some brutally numerical
treatments of the initial quantum system.

Technically, the goal could be achieved when one abbreviates
${\hbar^2}/({2\mu})= {\Lambda}^2$ and keeps the latter parameter
variable. One of the consequences would be that Schr\"{o}dinger
equation
 \be
 \left[- {\Lambda}^2\,\frac{d^2}{dx^2} +
 V^{}(x)
 \right ] \psi_n (x) =E_n\,
 \psi_n (x)
  \,, \ \ \ \ \
\,,\ \ \ \ \ \
 n=0, 1, \ldots\,
 \label{kseh}
 \ee
becomes more flexible. What remains unchanged is that near one of
the deep minima the potential can still be replaced by its
harmonic-oscillator approximation. This would yield the approximate
low-lying spectrum in which the $\Lambda=1$ energies
 $
 E_{n} \sim
  V(x_{\min})+ \,(2n+1){\omega}
 $
would be replaced by their rescaled $\Lambda\neq 1$ descendants
 $
 E_{n}(\Lambda) \sim
  V(x_{\min})+ \Lambda\,(2n+1){\omega}
 $. The variable $\Lambda$
interconnects the large-mass and semi-classical limits as well as,
alternatively the small-mass and ultra-quantum dynamical regimes
(cf. also \cite{Turbiner} in this respect).

The classical catastrophe theory reemerges in the semiclassical
limit in which all of the low-lying levels would converge to the
minimum of the potential \cite{Simon}. Vice versa, the quantum
effects are enhanced at large ${\Lambda}$ while becoming divergent
in the vanishing-mass limit $\mu \to 0$.

\end{document}